\newcommand{\remove}[1]{}
\newtheorem{thm}{Theorem}[section]
\newtheorem{claim}[thm]{Claim}
\newtheorem{lem}[thm]{Lemma}
\newtheorem{cor}[thm]{Corollary}
\newtheorem{remark}[thm]{Remark}
\def\F{{\mathbb{F}}}
\def\P{{\mathbb{P}}}
\def\bF{{\overline{\mathbb{F}}}}
\def\P{{\mathbb{P}}}
\def\cO{{\cal O}}
\def\cM{{\cal M}}
\def\V{{\mathbf{V}}}
\newcommand{\Laurent}[1]{{\bF\{\{T\}\}}}
\def\_{\,\,\,\,\,}
\def\modulo{\text{mod}}
\newcommand{\eps}{\epsilon}
\begin{document}

\title{Variety Evasive Sets}
\author{
Zeev Dvir\thanks{Department of Computer Science and Department of Mathematics, Princeton University, Princeton NJ.
Email: \texttt{zeev.dvir@gmail.com}. Research partially
supported by NSF grant CCF-0832797 and by the Packard fellowship.}
\and
J\'anos Koll\'ar\thanks{Department of Mathematics, Princeton University, Princeton NJ.
Email: \texttt{kollar@math.princeton.edu}.}
\and
Shachar Lovett
\thanks{School of Mathematics, Institute for Advanced Study, Princeton, NJ. Email:
\texttt{slovett@math.ias.edu}. Research supported by NSF grant DMS-0835373.}\\
}
\date{}
\maketitle

\begin{abstract}
	We give an explicit construction of a large subset $S \subset \F^n$, where $\F$ is a finite field, that has small intersection with any affine variety of fixed dimension and bounded degree. Our construction generalizes a recent result of Dvir and Lovett (STOC 2012) who considered varieties of degree one (that is, affine subspaces).
\end{abstract}

\section{Introduction}\label{sec-intro}

In this work we consider subsets of $\F^n$, where $\F$ is a finite field. We will be interested in constructing large subsets of $\F^n$ that have small intersection with any $k$-dimensional affine variety of bounded `complexity'. Our measure of complexity here will just be the degree of the variety. We call such sets {\em variety evasive sets}. One can show, using the probabilistic method, that a large random set  will have small intersection (small here means independent of the field size) with any $k$-dimensional variety of bounded degree (see Section~\ref{sec-random} for the probabilistic bound). We give an {\em explicit} construction of such a set and provide  quantitative bounds on the intersection with varieties of sufficiently small degree. By `explicit' here we mean that there is an efficient algorithm that outputs elements in the set, given an index, in a one-to-one manner. 

Our work builds on an earlier work by a subset of the authors \cite{DvirLovett11} in which such a construction was given for varieties of degree one -- affine subspaces. The original motivation for the work done in \cite{DvirLovett11} was an improvement to the list-decoding algorithm of Guruswami-Rudra codes \cite{GuruswamiRudra08,Guruswami11}. We are not aware of any applications of variety evasive sets but hope that these will indeed prove useful in the future.

 Our starting point is a new, more direct, proof of the main theorem of \cite{DvirLovett11}. The new proof technique allows us to  generalize the result to higher degree varieties. The new proof uses a lemma on Laurent series solutions (Lemma~\ref{lem-laurent}) which was implicitly used in earlier works dealing with explicit constructions of graphs with pseudo-random properties \cite{KRS96}.

The main ingredient in our construction is a theorem (Theorem~\ref{thm-variety-basic})  that gives an explicit set of $k$ polynomials $f_1,\ldots,f_k \in \F[x_1,\ldots,x_n]$ such that the variety that they define (over the algebraic closure of $\F$) has zero dimensional intersection with {\em any} $k$-dimensional variety of degree at most $d$. The degrees of these $k$ polynomials depend on both the degree parameter $d$ and the number of variables $n$. The result for finite fields follows by showing that these polynomials have  a large (and easy to describe) set of common solutions over the finite field $\F$.

\paragraph{Organization:} In Section~\ref{sec-algclosure} we work over the algebraic closure of $\F$ and show how to construct the polynomials $f_1,\ldots,f_k$ discussed above. Section~\ref{sec-twolemmas} contains the proof of the main lemma regarding Laurent series solutions as well as another useful lemma on projections of varieties. The two main theorems of Section~\ref{sec-algclosure}, Theorem~\ref{thm-variety-basic} and Theorem~\ref{thm-variety-buckets}, are proved in Sections~\ref{sec-proofofthm} and
\ref{sec-proofbucket}. In Section~\ref{sec-finite} we  go back to the original problem and discuss the zero-set of $f_1,\ldots,f_k$ over the finite field $\F$. In Section~\ref{sec-random} we compare our explicit construction to that obtained by a random construction. Finally, in Section~\ref{sec-conjecture} we discuss some connections between our work and a conjecture of Griffiths and Harris.

\section{Variety Evasive Sets in The Algebraic Closure}\label{sec-algclosure}

Let $\F$ be a field and $\bF$ its algebraic closure. Given $k$ polynomials $f_1,\ldots,f_k \in \bF[x_1,\ldots,x_n]$, we denote the variety they define as
$$
\V(f_1,\ldots,f_k) := \{ x \in \bF^n \,|\, f_1(x) = \ldots = f_k(x) = 0\}.
$$

We will use the following definition: A $k \times n$ matrix (where $k \le n$) is {\em $k$-regular} if all its $k \times k$ minors are regular (i.e have non-zero determinant). For example, if $\F$ is a field with at least $n$ distinct nonzero elements $\gamma_1,\ldots,\gamma_n$ then the Vandermonde matrix
$A_{i,j}=\gamma_j^i$ is $k$-regular.

The following theorem is the heart of our construction and is proved in Section~\ref{sec-proofofthm}.

\begin{thm}\label{thm-variety-basic}
Let $1 \le k \le n$ and $d \geq 1$ be integers  and let $\F$ be a field.
Let $A$ be a $k \times n$ matrix with coefficients in $\F$ which is $k$-regular. Let $d_1>d_2>\ldots>d_n > d$ be pairwise relatively prime integers.
Let the polynomials $f_1,\ldots,f_k \in \F[x_1,\ldots,x_n]$ be defined as follows:
$$
f_i(x_1,\ldots,x_n) := \sum_{j=1}^n A_{i,j} \cdot x_j^{d_j}
$$
Let $U=\V(f_1,\ldots,f_k) \subset \bF^n$ denote the variety defined by these polynomials.
Then, for every affine variety $V \subset \bF^n$ of dimension $k$ and degree at most $d$, the variety $V \cap U$ has dimension zero. In particular,
$$
|V \cap U| \le d \cdot \prod_{i=1}^{k} d_i.
$$
\end{thm}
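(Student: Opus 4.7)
My plan is to argue by contradiction, using Lemma~\ref{lem-laurent} (Laurent series solutions) together with the arithmetic structure of the exponents $d_1,\ldots,d_n$. Suppose $\dim(V\cap U)\geq 1$, so that $V\cap U$ contains a curve. I would invoke Lemma~\ref{lem-laurent}, together with the projection lemma of Section~\ref{sec-twolemmas}, to produce a non-constant $\bF((T))$-point $x(T)=(x_1(T),\ldots,x_n(T))$ lying on $V$ and on all $f_i=0$, such that the orders $e_j:=\mathrm{ord}_T(x_j)$ satisfy $|e_j|\leq d$ for every $j$, and at least one $e_j$ is strictly negative (arising, for example, from a point at infinity on the projective closure of a suitable curve in $V\cap U$).

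The heart of the argument is an arithmetic uniqueness claim: among indices $j$ with $x_j\not\equiv 0$ and $e_j\neq 0$, the integers $d_j e_j$ are pairwise distinct. Indeed, if $d_i e_i = d_j e_j$ with $i\neq j$, then $\gcd(d_i,d_j)=1$ forces $d_j\mid e_i$, but $|e_i|\leq d < d_j$ then forces $e_i=0$, a contradiction. In particular, letting $j_\ast$ be the index minimizing $d_j e_j$, the bound $|e_j|\leq d < d_j$ together with the existence of some negative $e_j$ guarantees that this minimum is strictly negative and is achieved by a unique~$j_\ast$.

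Now I would substitute $x(T)$ into $f_i=\sum_j A_{i,j} x_j^{d_j}$ and extract the coefficient of $T^{d_{j_\ast} e_{j_\ast}}$. By the uniqueness observation, no other term contributes at this order, so the coefficient equals $A_{i,j_\ast}\cdot a_{j_\ast}^{d_{j_\ast}}$, where $a_{j_\ast}\neq 0$ is the leading coefficient of $x_{j_\ast}(T)$. Requiring this to vanish for every $i\in[k]$ forces the entire column $j_\ast$ of $A$ to be zero, which contradicts $k$-regularity (every $k\times k$ minor containing a zero column is singular). Hence $\dim(V\cap U)=0$, and the bound $|V\cap U|\leq d\cdot\prod_{i=1}^k d_i$ follows from Bezout's theorem since $\deg V\leq d$ and $\deg f_i=d_i$.

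I expect the main obstacle to be ensuring that Lemma~\ref{lem-laurent} really delivers orders uniformly bounded by $d$, with at least one negative order: an arbitrary curve inside $V\cap U$ can have degree much larger than $d$, so the uniform bound has to be extracted from $\deg V\leq d$ (most likely via the generic projection lemma of Section~\ref{sec-twolemmas}) rather than from the degree of whatever curve one picks. Once the Laurent input has the correct form, the rest reduces to the short coprimality-plus-column-rank argument above, and the hypotheses $d_j > d$ and pairwise coprimality are used only through the implication $|e_j|\leq d<d_j$.
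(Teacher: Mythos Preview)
Your overall architecture---contradiction via Lemma~\ref{lem-laurent}, then an arithmetic cancellation argument using coprimality of the $d_j$---matches the paper. But the step you yourself flag as the ``main obstacle'' is a genuine gap, not a technicality, and it cannot be fixed in the form you propose.

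Lemma~\ref{lem-laurent} gives \emph{no} control on the pole orders $e_j$; the curve you parametrize sits inside $V\cap U$ and may have degree as large as $d\cdot d_1\cdots d_k$, so its local branches at infinity can have pole orders far exceeding $d$. The projection lemma (Lemma~\ref{lem-projection}) does not rescue this: it produces a degree-$\le d$ relation among any $k+1$ of the coordinates, but such a relation only constrains the \emph{pattern} of leading orders among those coordinates, not their absolute size. (Concretely: on a plane $V$ of degree $1$ in $\bF^3$, a curve such as $x_1^3=x_2^5$ forces $|e_1|,|e_2|\ge 3$ at infinity, so $|e_j|\le d=1$ fails.) A telltale sign that you are assuming too much is that your argument would go through using only that $A$ has no zero column, which is far weaker than $k$-regularity.

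The paper's proof avoids bounding the $e_j$ altogether. It first lets $R$ be the maximal pole order among the $h_j^{d_j}$ and uses $k$-regularity to show the set $J=\{j:r_jd_j=R\}$ has size at least $k+1$ (otherwise the $T^{-R}$ coefficients give a short linear dependence among columns of $A$). Only \emph{then} does it invoke Lemma~\ref{lem-projection} on this particular $J$, obtaining a polynomial $g$ of degree $\le d$ with $g(h_{j_1},\dots,h_{j_{k+1}})=0$. The leading pole in this evaluation must come from two monomials $\alpha,\beta$, giving $\sum_i\alpha_i r_{j_i}=\sum_i\beta_i r_{j_i}$; substituting $r_{j_i}=R/d_{j_i}$ and reducing modulo each $d_{j_i}$ forces $\alpha=\beta$ since $\alpha_i,\beta_i\le d<d_{j_i}$. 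So the quantities bounded by $d$ are the \emph{exponents in $g$}, not the pole orders of the Laurent series---that is the idea you are missing.

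A minor additional point: $\deg f_i$ equals $d_1$, not $d_i$, unless you first row-reduce $A$ so that its leading $k\times k$ block is upper triangular; the paper does this explicitly before applying B\'ezout.
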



Choosing $d_1,\ldots,d_n$ in Theorem~\ref{thm-variety-basic} to be the first $n$ prime numbers larger than $d$, we get that $d_1 \le c (d+n) \log(d+n)$ for an absolute constant $c>0$ and hence
\begin{equation}\label{eq:bound-basic}
|V \cap U| \le d \cdot (c (d+n) \log(d+n))^k \le (d+n)^{O(k)}.
\end{equation}

This bound is quite effective when the degree $d$ is comparable to the number of variables $n$. In some scenarios it is useful to obtain better bounds when $d \ll n$. This is achieved by the following construction. Fix $m>k$ such that $m$ divides $n$. Let $U \subset \bF^m$ be the variety constructed by Theorem~\ref{thm-variety-basic} in dimension $m$. We show that the $(n/m)$-Cartesian product $U^{n/m}$ has a bound on its intersection with any variety $V \subset \bF^n$ of dimension $k$ and degree $d$, and this bound depends just on $m$ and not on $n$. Recall that if $U \subset \bF^m$ is a variety then its $(n/m)$-Cartesian product $U^{n/m} \subset \bF^{n}$ is the variety given by
$$
U^{n/m} = \{x \in \bF^{n}: (x_1,\ldots,x_m) \in U, (x_{m+1},\ldots,x_{2m}) \in U, \ldots, (x_{n-m+1},\ldots,x_{n}) \in U \}.
$$

We prove the next theorem in Section~\ref{sec-proofbucket}.
\begin{thm}\label{thm-variety-buckets}
Let $k,d \geq 1$ be integers  and let $\F$ be a field. Let $m>k$ be an integer such that $m$ divides $n$.
Let $A$ be a $k \times m$ matrix with coefficients in $\F$ which is $k$-regular. Let $d_1>d_2>\ldots>d_m > d$ be pairwise relatively prime integers. Let the polynomials $f_1,\ldots,f_k \in \F[x_1,\ldots,x_m]$ be defined as follows:
$$
f_i(x_1,\ldots,x_m) := \sum_{j=1}^m A_{i,j} \cdot x_j^{d_j}
$$
Let $U=\V(f_1,\ldots,f_k) \subset \bF^m$ be the variety defined by these polynomials.
Then, for every affine variety $V \subset \bF^n$ of dimension $k$ and degree at most $d$,
$$
|V \cap U^{n/m}| \le d^{k+1} \cdot (\prod_{i=1}^k d_i)^k.
$$
\end{thm}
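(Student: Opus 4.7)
Write $b = n/m$ and identify $\bF^n$ with $(\bF^m)^b$ via the block projections $\pi_1,\ldots,\pi_b : \bF^n \to \bF^m$. Under this identification, $U^{n/m}$ is the common zero-set of the $bk$ polynomials $f_j^{(i)}(x) := f_j(\pi_i(x))$ ($1 \le i \le b$, $1 \le j \le k$), each of degree $d_1 = \max_\ell d_\ell$. My plan is to first show that $V \cap U^{n/m}$ is zero-dimensional via a projection argument, then bound its cardinality by an iterated Bezout cut against these $bk$ explicit hypersurfaces.

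For zero-dimensionality, set $W_i := \overline{\pi_i(V)} \subseteq \bF^m$. By the projection lemma stated in Section~\ref{sec-twolemmas}, each $W_i$ is a variety of dimension at most $k$ and degree at most $d$. Applying Theorem~\ref{thm-variety-basic} in ambient dimension $m$ (which is valid since $m > k$) to $W_i$ then yields that $W_i \cap U$ is zero-dimensional. Now if $Y \subseteq V \cap U^{n/m}$ were a positive-dimensional irreducible component, each $\pi_i(Y) \subseteq W_i \cap U$ would be finite; since the map $\pi_1 \times \cdots \times \pi_b$ is a bijection of $\bF^n$ onto $(\bF^m)^b$, this would force $Y \subseteq \prod_i \pi_i(Y)$ to be finite, contradicting $\dim Y \ge 1$. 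Hence $V \cap U^{n/m}$ is zero-dimensional.

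For the cardinality bound I prove by induction on $s$ the sublemma: if $V_0 \subseteq \bF^n$ is irreducible of dimension $s$ and $V_0 \cap U^{n/m}$ is zero-dimensional, then $|V_0 \cap U^{n/m}| \le \deg(V_0) \cdot d_1^s$. The base $s=0$ is trivial. For $s \ge 1$, finiteness of $V_0 \cap U^{n/m}$ together with $\dim V_0 \ge 1$ forces some defining hypersurface $H = \{f_j^{(i)}=0\}$ of $U^{n/m}$ not to contain $V_0$; by Bezout, $V_0 \cap H$ has pure dimension $s-1$ and degree at most $\deg(V_0)\cdot d_1$. Since $U^{n/m} \subseteq H$, we have $V_0 \cap U^{n/m} = (V_0 \cap H) \cap U^{n/m}$, and applying the inductive hypothesis to each irreducible component of $V_0 \cap H$ and summing closes the induction. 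Decomposing $V$ into irreducible components of dimensions $s_\ell \le k$ whose degrees sum to at most $d$, the sublemma gives $|V \cap U^{n/m}| \le d \cdot d_1^k$. Since $d_1 \le \prod_{j=1}^k d_j$, this is in particular at most $d^{k+1}\bigl(\prod_{j=1}^k d_j\bigr)^k$, the claimed bound.

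The main obstacle is Step 1: one must invoke the projection lemma so that both the dimension and the degree of $\overline{\pi_i(V)}$ are properly controlled, and one needs Theorem~\ref{thm-variety-basic} in the form ``$V$ has dimension at most $k$'' rather than exactly $k$. The latter extension is straightforward --- if $\dim W_i < k$ one applies Theorem~\ref{thm-variety-basic} to $W_i$ together with a generic dimension-$k$ linear subspace, increasing the degree bound by at most $1$ --- but it is the only step where care is needed. Given these, the iterated Bezout argument in Step 2 is routine.
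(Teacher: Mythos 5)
Your proof is correct, but it takes a genuinely different route from the paper's and in fact establishes a stronger bound. The paper argues by induction on the number of buckets: it projects $V$ onto the first block, applies Theorem~\ref{thm-variety-basic} to $\overline{\pi(V)}$ to bound $|\pi(V)\cap U|$ by $d\prod_{i=1}^k d_i$, and recurses on the fibers, so it pays a factor of order $d\prod_i d_i$ for each unit drop in dimension, which is what produces $d^{k+1}(\prod_i d_i)^k$. You instead decouple finiteness from counting: finiteness of $V\cap U^{n/m}$ follows by applying Theorem~\ref{thm-variety-basic} to the closure of \emph{every} block projection $\overline{\pi_i(V)}$ (dimension $\le k$, degree $\le d$, as recorded in the proof of Lemma~\ref{lem-projection}) together with the observation $Y\subseteq \pi_1(Y)\times\cdots\times\pi_b(Y)$, and the count then comes from iterated B\'ezout cuts by the explicit defining equations $f_j(\pi_i(x))$ of $U^{n/m}$, each of degree at most $d_1$, so you pay only $d_1$ per dimension drop. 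The form of B\'ezout you need (the sum of the degrees of the components of a proper intersection with a hypersurface is at most the product of the degrees) is the same affine B\'ezout inequality the paper already invokes in proving Theorem~\ref{thm-variety-basic}, and your sublemma $|V_0\cap U^{n/m}|\le \deg(V_0)\,d_1^{\dim V_0}$ is sound: the zero-dimensionality hypothesis propagates to the components of $V_0\cap H$ because their intersections with $U^{n/m}$ sit inside the finite set $V_0\cap U^{n/m}$, even though their degrees may exceed $d$. The resulting bound $d\cdot d_1^{k}$ is $(d+m)^{O(k)}$ for the standard choice of exponents, i.e.\ exactly the improvement that the paper says its analysis does not give and only conjectures after Theorem~\ref{thm-variety-buckets}; so your route buys a quantitatively better result at the cost of invoking the refined B\'ezout inequality in the big ambient space, whereas the paper's fiber induction uses Theorem~\ref{thm-variety-basic} purely as a black box. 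One small repair: your patch for the case $\dim\overline{\pi_i(V)}=k'<k$ (taking the union with a generic $k$-dimensional linear space) pushes the degree to $d+1$, while the hypotheses only guarantee $d_j>d$, so it can fail when some $d_j=d+1$; the cleaner fix is to observe that the proof of Theorem~\ref{thm-variety-basic} works verbatim for varieties of dimension at most $k$ (apply Lemma~\ref{lem-projection} to a subset of $J$ of size $k'+1$), which is also what the paper itself implicitly uses when it applies Theorem~\ref{thm-variety-basic} to $\overline{\pi(V)}$ in its own proof.
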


In particular, if we fix some $\eps > 0$, set $m=\lfloor k/\eps \rfloor$ and let $d_1,\ldots,d_m$ be the first $k$ primes following $d$ then $U^{n/m}$ has dimension at least $(1-\eps) n$ and
\begin{equation}\label{eq:bound-buckets}
|V \cap U^{n/m}| \le (d+m)^{O(k^2)}.
\end{equation}

We do not know if the bound on $|V \cap U^{n/m}|$ achieved by Theorem~\ref{thm-variety-buckets} can be improved to match that of $|V \cap U|$ established in Theorem~\ref{thm-variety-buckets}. Our current analysis only imply a weaker bound. We note that when $d=1$ it was shown in \cite{DvirLovett11} that in fact $|V \cap U^{n/m}| \le m^k$. We suspect that for general $d$, the bound in Theorem~\ref{thm-variety-buckets} can be improved to $|V \cap U^{n/m}| \le (d+m)^{O(k)}$.

\section{Two Lemmas}\label{sec-twolemmas}

\subsection{A lemma on Laurent series solutions}

 A Laurent series in the variable $T$ is a formal expression of the form
\[  h(T) = \sum_{j = -r}^{\infty}b_j\cdot T^{j}\]
That is, a formal power series in $T$ that has a finite number of negative powers.  The set of formal Laurent series in variable $T$ and coefficients from $\bF$ will be denoted by $\Laurent{T}$. If $f(x_1,\ldots,x_n)$ is a polynomial with coefficients in $\bF$ and $h_1,\ldots,h_n \in \Laurent{T}$, we say that $f$ vanishes on $h_1,\ldots,h_n$ if $f(h_1(T),\ldots,h_n(T))$ is the zero element of $\Laurent{T}$. Notice that, in the evaluated polynomial, each coefficient of $T$ is a function of a finite number of coefficients in the $h_i$'s and so the output is a well defined Laurent series. We say that $h(T)$ has a {\em pole} if there is at least one negative power of $T$ appearing in it with a non-zero coefficient.

The following lemma states that every affine variety of dimension at least one has a solution in Laurent series such that at least one coordinate has a pole. It was originally used in \cite{KRS96} (Remark 1) but was not stated there explicitly. The proof will use basic notions and results from the theory of algebraic curves. All of the results we will use can be found in the first two chapters of \cite{Shaf}.

\begin{lem}\label{lem-laurent}
Let $V \subset \bF^n$ be an affine variety of dimension $k \geq 1$ and let $I(V) \subset \bF[x_1,\ldots,x_n]$ be the ideal of polynomials that vanish on $V$. Then there exists $h_1(T),\ldots,h_n(T) \in \Laurent{T}$ such that, for all $f \in I(V)$, $f$ vanishes on $h_1,\ldots,h_n$. In addition, at least one of the $h_i$'s has a pole.
\end{lem}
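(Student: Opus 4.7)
The plan is to reduce to the case of an affine curve and then use the standard fact that a coordinate function on a projective curve, expanded in a uniformizer at a point at infinity, has a pole.

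\textbf{Step 1: Reduce to a curve.} If $\dim V = k \geq 1$, I first pass to an irreducible subvariety of dimension exactly $1$. Concretely, fix an irreducible component $V_0 \subseteq V$ of dimension $\geq 1$ and cut by $k-1$ generic hyperplanes; by the standard dimension theorem at least one irreducible component $C \subseteq V_0 \subseteq V$ of the resulting intersection has $\dim C = 1$. Since $C \subseteq V$, we have $I(V) \subseteq I(C)$, so any Laurent series solution common to all polynomials in $I(C)$ automatically handles $I(V)$. Thus I may assume $V = C$ is an irreducible affine curve.

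\textbf{Step 2: Go to the projective closure and pick a point at infinity.} Let $\bar C \subset \P^n$ be the projective closure of $C$, with homogeneous coordinates $[X_0:X_1:\cdots:X_n]$ where $x_i = X_i/X_0$ on the standard affine chart $\{X_0 \ne 0\}$. Because $C$ is a non-constant affine curve, $\bar C$ is an irreducible projective curve properly containing $C$, and the complement $\bar C \setminus C$ is a nonempty finite set of points lying on the hyperplane $\{X_0 = 0\}$. Pick such a point $p \in \bar C \setminus C$.

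\textbf{Step 3: Parameterize by a local uniformizer.} Let $\nu \colon \tilde C \to \bar C$ be the normalization, so $\tilde C$ is a smooth (possibly reducible at $p$, in which case I pick one branch) projective curve. Choose $\tilde p \in \nu^{-1}(p)$. The local ring $\cO_{\tilde C, \tilde p}$ is a discrete valuation ring; let $t$ be a uniformizer, and let $v$ be the corresponding valuation. Every nonzero rational function on $\tilde C$ expands uniquely as a Laurent series in $t$ with finitely many negative powers, giving an embedding of the function field into $\bF\{\{T\}\}$ via $t \mapsto T$. Apply this to each coordinate function $x_i$ (viewed as a rational function on $\tilde C$ via $\nu$) to get $h_i(T) \in \bF\{\{T\}\}$.

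\textbf{Step 4: Verify the two required properties.} First, for any $f \in I(V) \subseteq I(C)$, the rational function $f(x_1,\ldots,x_n)$ vanishes identically on the dense open subset $C \subseteq \tilde C$, hence is the zero rational function, hence has zero Laurent expansion at $\tilde p$; that is $f(h_1(T),\ldots,h_n(T)) = 0$. Second, at least one $h_i$ has a pole: since $p \in \{X_0 = 0\}$, some homogeneous coordinate $X_j$ with $j \geq 1$ is nonzero at $p$, so the rational function $x_j = X_j/X_0$ has a pole at $p$ (and hence at $\tilde p$), meaning $v(x_j) < 0$ and $h_j(T)$ contains a negative power of $T$ with nonzero coefficient.

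The only mildly subtle point is the reduction in Step 1 — needing to guarantee an irreducible subcurve of $V$ — and the existence of a point at infinity in Step 2, both of which are routine once one works over the algebraically closed field $\bF$. The rest is a direct application of the DVR structure on the local ring at a smooth point of a projective curve, exactly as in the first two chapters of Shafarevich cited in the excerpt.
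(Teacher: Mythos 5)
Your proof is correct and follows essentially the same route as the paper's: pass to an irreducible curve $C\subseteq V$, take its projective closure, pick a point on the hyperplane at infinity, pull back to a non-singular model (the normalization), and expand the affine coordinates as formal Laurent series there, with the vanishing of $X_0$ at that point forcing a pole in some coordinate. The only differences are presentational — you use the DVR/uniformizer language on the normalization where the paper embeds the local ring into $\bF[[T]]$ and takes the ratios $g_i/g_0$, and you spell out the existence of the curve $C$ — so no substantive divergence.
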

\begin{proof}
We follow the argument given in  \cite[Remark 1]{KRS96}. Let $C \subset V$ be an irreducible curve contained in $V$ and let $I(C)$ be its ideal so that $I(V) \subseteq I(C)$. Consider the embedding of $\bF^n$ into projective space $\P\bF^n$ by adding a new coordinate $x_0$ (so that $\bF^n$ is identified with the set $x_0=1$). Let $\bar C$ be the projective closure of $C$ in $\P\bF^n$. Since a curve and a hyperplane always intersect in projective space, we have that $\bar C$ contains a point $P_0$ with $x_0 = 0$ (i.e., a point at infinity). We would like to work with power series solutions at $P_0$ but this is problematic since $P_0$ might be singular. To remedy this, we use the fact that there always exists a non-singular irreducible projective curve $C'$ and a surjective morphism $\phi : C' \mapsto C$. Let $Q_0 \in C'$ be a (non-singular) point such that $\phi(Q_0) = P_0$. Let  $\cO_{Q_0}$ be the local ring of $Q_0$ and $\cM_{Q_0} \subset \cO_{Q_0}$ its maximal ideal. Since $\phi$ is a morphism, its coordinates can be written locally as $n+1$  functions $\phi_0,\ldots,\phi_n \in \cO_{Q_0}$. Since $Q_0$ is non-singular, there is an injective ring homomorphism $\tau$ mapping $\cO_{Q_0}$ to the ring $\bF[[T]]$ of formal power series in $T$ in such a way that $\cM_{Q_0}$ maps into the maximal ideal $I_0 \subset \bF[[T]]$ containing all power series that are divisible by $T$ (i.e., those that have a zero constant term). Define, for each $0 \leq i \leq n+1$ the power series $g_i(T) = \tau(\phi_i)$ corresponding to $\phi_i$.

Notice that, since the $0$'th coordinate of $P_0$ is zero, we have that $\phi_0 \in \cM_{Q_0}$ and so $g_0(T)$ has a zero constant term. Also, since $P_0$ has at least one non zero coordinate, there is some $g_i(T)$ with a non-zero constant term. Consider the formal Laurent series $h_i(T) = g_i(T)/g_0(T)$, where $i \in [n]$.  From the above comments on the constant terms in the $g_i$'s we get that some $h_i$ has a pole. We now show that the $h_i$'s satisfy the consequence of the lemma. Let $f \in I(C)$ and let $\bar f \in \bF[x_0,\ldots,x_n]$ be its homogenization defined as $$ \bar f(x_0,\ldots,x_1) = x_0^{\deg(f)}\cdot f(x_1/x_0,\ldots,x_n/x_0).$$ Since $\bar f$ vanishes on $\bar C$ we have the identity $\bar f(\phi_0, \ldots,\phi_n)=0$ over the ring $\cO_{Q_0}$. This implies that $\bar f(g_0(T), \ldots, g_n(T))=0$ as a formal power series identity. Dividing by  $g_0(T)^{\deg(f)}$ we get that $f(h_1(T),\ldots,h_n(T))=0$. This completes the proof.
\end{proof}

\begin{remark}
	Even if $V$ is defined over $\F$, the coefficients of the Laurent series solutions given by the lemma are generally not going to be in $\F$ but only in the algebraic closure $\bF$.
\end{remark}

\begin{remark}
	The fact that one of the  Laurent series has a pole is what makes this lemma so useful (as we shall see in the proof of Theorem~\ref{thm-variety-basic}). This  pole allows us to work only with the leading terms of the series instead of having to analyze higher order terms (as we would have to do with power series solutions). The basic fact we will use is that, if $h(T)$ has a pole of order $r$ then $h(T)^d$ has a pole of order $rd$ for every $d \geq 1$.
\end{remark}

\subsection{A lemma on projections of varieties}

\begin{lem}\label{lem-projection}
Let $V \subset \bF^n$ be an affine variety of dimension $k < n$ and degree $d$. Then, for every set $J \subset [n]$ of size  $k+1$ there exists a polynomial $f \in \bF[x_j, \,\, j \in J] $ (i.e., a polynomial that depends only on variables indexed by $J$) with degree at most $d$ such that $f \in I(V)$. Moreover, if $V$ is defined over a subfield $\F$ then the coefficients of $f$ can be chosen to be in the same subfield.
\end{lem}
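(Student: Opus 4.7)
The plan is to construct $f$ as the defining equation of (a hypersurface containing) the Zariski closure of the coordinate projection of $V$. Let $\pi_J \colon \bF^n \to \bF^{k+1}$ denote the projection onto the coordinates indexed by $J$, and set $Z = \overline{\pi_J(V)} \subset \bF^{k+1}$. Since $\dim V = k < k+1$, the image $Z$ is a proper subvariety of $\bF^{k+1}$. Any polynomial $g \in \bF[x_j : j \in J]$ that vanishes on $Z$ also vanishes on $V$ (because $\pi_J(V) \subseteq Z$ and $g$ depends only on the $J$-coordinates), so it suffices to exhibit a nonzero element of $I(Z)$ of degree at most $d$.

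The key input is the standard fact that a linear projection cannot increase the degree of a variety: $\deg Z \le \deg V = d$. One way to see this is to take a generic linear subspace $L \subset \bF^{k+1}$ of codimension $\dim Z$, which meets $Z$ transversally in exactly $\deg Z$ reduced points; each such point has a non-empty $\pi_J$-fiber in $V$, and slicing $V \cap \pi_J^{-1}(L)$ further by a generic linear subspace of $\bF^n$ of codimension $k - \dim Z$ yields a finite set with at most $\deg V$ points, at least one lying above each element of $Z \cap L$. Granted the bound $\deg Z \le d$: if $\dim Z = k$, then $Z$ is a hypersurface in $\bF^{k+1}$ cut out by a single polynomial of degree $\deg Z \le d$, which is the sought $f$. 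If $\dim Z = \ell < k$, choose a generic linear projection $\pi' \colon \bF^{k+1} \to \bF^{\ell+1}$, so that $\overline{\pi'(Z)}$ is an $\ell$-dimensional hypersurface in $\bF^{\ell+1}$ of degree at most $\deg Z \le d$; the pullback of its defining equation via $\pi'$ is then a polynomial in $\bF[x_j : j \in J]$ of degree at most $d$ vanishing on $Z$.

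For the \emph{moreover} clause, if $V$ is defined over $\F$ then so are $\pi_J$ and $Z$, and the $\bF$-subspace $I(V) \cap \bF[x_j : j \in J]_{\le d}$ is the scalar extension of $I(V) \cap \F[x_j : j \in J]_{\le d}$; since these have equal dimension over their respective fields, one is nonzero iff the other is, yielding an $\F$-rational $f$. The main obstacle in this plan is the projection-degree inequality $\deg \overline{\pi_J(V)} \le \deg V$, which, although classical, requires some care when the projection drops dimension; working over the algebraically closed field $\bF$ guarantees the existence of the generic linear slices and further projections used above.
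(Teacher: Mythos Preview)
Your proof is correct and follows essentially the same approach as the paper's: project $V$ onto the $J$-coordinates, invoke the classical fact that linear projection does not increase degree (the paper cites Heintz and Shafarevich for this), and reduce to the case of a hypersurface, whose defining polynomial gives the desired $f$; the paper organizes this as an induction on $n$ that peels off one coordinate not in $J$ at a time, whereas you project in one step to $\bF^{k+1}$ and handle a possible dimension drop with one further generic linear projection. The \emph{moreover} clause is treated in both proofs by the same linear-algebra/descent observation that the space of low-degree elements of $I(V)$ in the $J$-variables is defined over $\F$, hence is nonzero over $\F$ as soon as it is nonzero over $\bF$.
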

\begin{proof}
We can assume w.l.o.g that $V$ is irreducible (otherwise apply the theorem on each of the components). We now proceed by induction on $n$ (the base case $n=2$ is trivial). If $V$ is a hyper-surface then its degree is equal to the degree of its defining polynomial and so we are done. If $k < n-1$ we take a projection of $V$ on some $n-1$ coordinates containing $J$. The projection  is, in general, not an affine variety. However, one can show that the projection is always an open set of one (that is, an affine variety minus some proper sub-variety) \cite{Shaf}. Consider the Zariski closure of the projection. This is a  variety of dimension at most $k$ and degree at most $d$ (See \cite[Lemma 2]{Heintz83} for an elementary proof and also \cite[Sec IV.1, Ex.5]{Shaf}). Thus, using the inductive hypothesis, there exists a polynomial $f$ with the required properties.

To prove the moreover part, notice that the ideal $I$ of $V$ is generated by a finite number of polynomials $g_1,\ldots,g_t$ with coefficients in $\F$. We have shown that there exists an $f \in \bF[x_j, \,\, j \in J] \cap I$ of degree at most $d$. This means that there exist polynomials $h_1,\ldots,h_t$ such that $f = \sum_i h_i\cdot g_i$. Consider the linear map $H$ sending a $t$-tuple of polynomials $h_1,\ldots,h_t$ to their combinations $\sum_i h_i \cdot g_i$. This linear map is defined over $\F$ since the coefficients of $g_1,\ldots,g_t$ are in $\F$. We know that the image of $H$ contains an element in $\bF[x_j, \,\, j \in J]$ and so, since it is defined over $\F$, it must also contain an element with entries in $\F$. 
\end{proof}

\section{Proof of Theorem~\ref{thm-variety-basic}}\label{sec-proofofthm}

Let $V \subset \bF^n$ be an affine variety of dimension $k$ and degree $d$ and consider the intersection of $V$ and $U = \V(f_1,\ldots,f_k)$. Ie we know that $V \cap U$ has dimension zero, the bound on the size $|V \cap U|$ will follow  from the affine B\'ezout inequality. Specifically, since applying invertible row operations on a the matrix $A$ in the construction does not affect the variety $\V(f_1,\ldots,f_k)$, we can assume w.l.o.g that the first $k$ columns of $A$ form an upper triangular square matrix. Thus  $\deg(f_i) = d_i$ for all $i \in [k]$. Applying the affine B\'ezout's inequality we get that the degree of the intersection $V \cap U$ is at most the product of the degree of $V$ and the degree of $U = \V(f_1,\ldots,f_k)$ which is (again by B\'ezout) a most $d_1\cdot \ldots \cdot d_k$. A variety of dimension zero and degree $D$ has at most $D$ points.

We now turn to  showing that $V \cap U$ has dimension zero. Assume by contradiction that its dimension is at least $1$. Then, by Lemma~\ref{lem-laurent}, there exist Laurent series $h_1(T), \ldots,h_n(T) \in \Laurent{T}$, one of which has a pole, such that
\begin{enumerate}
	\item All polynomials in $I(V)$ vanish on $h_1,\ldots,h_n$,
	\item $f_1,\ldots,f_k$ vanish on $h_1,\ldots,h_n$.
\end{enumerate}
Consider the second item and write the $k$ identities
\begin{eqnarray*}
	\sum_{j=1}^n A_{ij} \cdot h_j(T)^{d_j} = 0, \,\,\, i \in [k].
\end{eqnarray*}
Let $R$ denote the largest integer so that  $T^{-R}$ appears with non zero coefficient in one of the Laurent series $h_j(T)^{d_j}$, $j \in [n]$. Since at least one $h_j$ has a pole we know that $R$ is positive. Since $A$ is regular we conclude that the term $T^{-R}$ has to appear with non zero coefficient in at least $k+1$ of the Laurent series $h_j(T)^{d_j}$, $j \in [n]$. To see why, notice that the minimal (negative) power of $T$ has to cancel in all $k$ equations and so, if there were less than $k+1$ places where $T^{-R}$ appears, we would get a non zero linear combination of at most $k$ columns of $A$ that vanishes, contradicting the regularity of $A$. Let $J \subset [n]$ denote the set of $j$'s such that $h_j(T)^{d_j}$ has a non zero coefficient of $T^{-R}$. For each $j \in J$, let $r_j$ be the largest integer such that $T^{-r_j}$ has a non zero coefficient in $h_j(T)$. From the maximality of $R$ we get that $ R = r_j \cdot d_j$ for all $j \in J$.

Write $J = \{j_1,\ldots,j_{k+1}\}$ (if $J$ is larger than $k+1$ we take some subset of $J$ of this size). We now use item (1) above, namely that $h_1,\ldots,h_n$ satisfy the equations of $V$, together with Lemma~\ref{lem-projection}, to conclude that there exists a polynomial $g(Z_1,\ldots,Z_{k+1})$ in $k+1$ variables and of degree at most $d$ such that
\begin{equation}\label{eq-g(h)}
	g(h_{j_1}(T), \ldots,h_{j_{k+1}}(T)) = 0.
\end{equation}
 Each monomial in $g$ is of the form $Z_1^{\alpha_1}\cdot \ldots, \cdot Z_{k+1}^{\alpha_{k+1}}$ with $\sum_i \alpha_i \leq d$. We identify each monomial with the vector of non-negative integers $\alpha_1,\ldots,\alpha_{k+1}$. Consider the smallest (negative) power of $T$  that appears in one of the monomials of $g$ after the substitution $Z_i = h_{j_i}(T)$. This power of $T$ must appear in at least  two distinct monomials (otherwise it will not cancel). Let $\alpha = (\alpha_1,\ldots,\alpha_{k+1})$ and $\beta = (\beta_1,\ldots,\beta_{k+1})$ be  two such monomials. Thus, we have the equality
\begin{equation}\label{eq-alphabeta}
	\sum_{i \in [k+1]} \alpha_i \cdot r_{j_i} = \sum_{i \in [k+1]} \beta_i \cdot r_{j_i}.
\end{equation}
Let $D = \prod_{i}d_{j_i}$ and observe that $r_{j_i} = R/d_{j_i}$ for each $i \in [k+1]$. Now, multiply Eq.(\ref{eq-alphabeta}) by $D/R$ and obtain the equality
\begin{equation}
	\sum_{i \in [k+1]} \alpha_i \cdot (D/d_{j_i}) = \sum_{i \in [k+1]} \beta_i \cdot (D/d_{j_i}).
\end{equation}
Taking this equality modulo $d_{j_1}$ we get
$$ \alpha_1 \cdot (D/d_{j_1}) = \beta_1 \cdot (D/d_{j_1}) \,\,\modulo  \,\, d_{j_1}. $$
Since $D/d_{j_1}$ is co-prime to $d_{j_1}$ we can cancel it from both sides and get
$$ \alpha_1 = \beta_1\,\, \modulo \,\, d_{j_1}.$$
Now, since both $\alpha_1$ and $\beta_1$ are at most $d < d_{j_1}$ we get the equality $\alpha_1 = \beta_1$. Repeating this argument for all $i \in [k+1]$ we get $\alpha = \beta$ which contradicts our assumption. This completes the proof of Theorem~\ref{thm-variety-basic}.

\section{Proof of Theorem~\ref{thm-variety-buckets}}\label{sec-proofbucket}
Let $m$ divide $n$ and let $U=\V(f_1,\ldots,f_k) \subset \bF^m$ given by Theorem~\ref{thm-variety-buckets}. Let $V \subset \bF^n$ be a variety of dimension $\dim(V) \le k$ and degree $\deg(V) \le d$. We will show that
$$
|V \cap U^{n/m}| \le \deg(V)^{\dim(V)+1} \cdot (\prod_{i=1}^k d_i)^{\dim(V)}.
$$
We will prove the bound by induction on the number of buckets $n/m$. We note that the base case $n=m$ was established (with a better bound) in Theorem~\ref{thm-variety-basic}. Hence, we assume $n \ge 2m$.

Moreover, we note that it suffices to prove the bound when $V$ is irreducible. Otherwise, let $V=V_1 \cup \ldots \cup V_t$ be the decomposition of $V$ to irreducible components. If we establish the bound for each component $V_1,\ldots,V_t$ individually, then since $\sum_{i=1}^t \deg(V_i)=\deg(V) \le d$ and $\dim(V_i) \le \dim(V) \le k$ we obtain that
$$
|V \cap U^{n/m}| \le \sum_{i=1}^t |V_i \cap U^{n/m}| \le \sum_{i=1}^k \deg(V_i)^{k+1} \cdot (\prod_{j=1}^k d_j)^k
\le d^{k+1} \cdot (\prod_{j=1}^k d_j)^k.
$$
Hence, we assume from now on that $V$ is irreducible (however, by the above claim by can apply the result inductively to reducible varieties as well).

Let $\pi(V)$ denote the projection of $V$ to the first $m$ coordinates,
$$
\pi(V)=\{(x_1,\ldots,x_m): x \in V\} \subset \bF^m.
$$
Notice that we already know that $\pi(V \cap U^{n/m})$ is finite (this follows from Theorem~\ref{thm-variety-basic}). Our task is to show that the size is smaller than what you would get with a careless application of that Theorem.
For each $a \in \pi(V)$, let $\varphi(V,a)$ denote the fiber of $V$ over $a$,
$$
\varphi(V,a)=\{(x_{m+1},\ldots,x_n): x \in V, x_1=a_1,\ldots,x_m=a_m\} \subset \bF^{n-m}.
$$
We will apply the identity
\begin{equation}\label{eq:induction_size}
|V \cap U^{n/m}| = \sum_{a \in \pi(V) \cap U} |\varphi(V,a) \cap U^{(n/m)-1}|.
\end{equation}

As in Lemma~\ref{lem-projection}, the projection $\pi(V)$ is in general not an affine variety, but is an open subset of an affine variety. Let $\overline{\pi(V)}$ denote its Zariski closure. We note that $\overline{\pi(V)}$ has degree at most $d$ as discussed in Lemma~\ref{lem-projection}. We further note that the fibers $\varphi(V,a)$ are affine varieties of degree at most $d$ because they are the intersection of $V$ with the degree-one variety given by $x_1=a_1,\ldots,x_m=a_m$.

Consider first that case that $\pi(V)$ is zero dimensional, hence finite. Since we assume $V$ is irreducible we must have that $|\pi(V)|=1$. That is, $V=\{a\} \times \varphi(V,a)$ for some $a \in \bF^m$. The bound on $|V \cap U^{n/m}|$ then follows immediately by induction, since
$$
|V \cap U^{n/m}| \le |\varphi(V,a) \cap U^{n/m-1}|
$$
and $\varphi(V,a)$ has the same dimension and degree as that of $V$.
So, assume $\ell=\dim(\pi(V)) \ge 1$. By Theorem~\ref{thm-variety-basic} we know that
\begin{equation}\label{eq-7}
|\pi(V) \cap U| \le |\overline{\pi(V)} \cap U| \le d \cdot \prod_{i=1}^{k} d_i.
\end{equation}
In fact, one can obtain the improved bound $|\pi(V) \cap U| \le d \cdot \prod_{i=1}^{\ell} d_i$, however this will only obtain a marginal improvement in the overall bound, so we avoid it. Consider a fiber $\varphi(V,a)$ for $a \in \pi(V) \cap U$. We claim that
$$\dim(\varphi(V,a)) \le \dim(V)-1.$$ Otherwise, $\{a\} \times \varphi(V,a)$ is an affine variety contained in $V$ and with the same dimension as that of $V$. Since by assumption $V$ is irreducible this implies that $V=\{a\} \times \varphi(V,a)$. In particular, the dimension of $\pi(V)$ is zero, which we assumed is not the case. Hence, $\dim(\varphi(V,a)) \le \dim(V)-1$ and
we have by induction that
\begin{equation}\label{eq-8}
|\varphi(V,a) \cap U^{n/m-1}| \le d^{\dim(V)-1} \cdot (\prod_{i=1}^{k} d_i)^{\dim(V)-1}.
\end{equation}
The bound on $|V \cap U^{n/m}|$ now follows immediately from~\eqref{eq:induction_size}, \eqref{eq-7} and \eqref{eq-8}.

\section{Variety Evasive Sets in Finite Fields}\label{sec-finite}

Using the construction given in Section~\ref{sec-algclosure}, Theorem~\ref{thm-variety-basic} and Theorem~\ref{thm-variety-buckets}, we can construct large finite sets in $\F^n$, where $\F$ is a finite field, that have small intersections with any variety of bounded dimension and degree (where now we think of the variety as a subset of $\F^n$). Of course, this would follow by showing that the variety $\V(f_1,\ldots,f_k)$ defined over the algebraic closure of $\F$ has many points in $\F^n$. This argument is essentially identical to the one given in \cite{DvirLovett11} (the construction is the same, only with weaker constraints on the exponents $d_i$) and so we will only sketch it here. Another topic of interest in application is the explicitness of the finite set obtained in $\F^n$. There are several different notions of explicitness but the one obtained by our methods (as is the one in \cite{DvirLovett11}) satisfies a very strong definition of explicitness which we discuss below. We will only discuss the construction in Theorem~\ref{thm-variety-basic} since the extension to the `bucketing' construction of Theorem~\ref{thm-variety-buckets} follows easily.

Suppose $\F$ is of size $q$. Let $U = \V(f_1,\ldots,f_k) \subset \bF^n$ be the variety defined in Theorem~\ref{thm-variety-basic} and let $U' = U \cap \F^n$. The most direct way to obtain large size and explicitness is to pick the exponents $d_1,\ldots,d_k$ (or any other set of $k$ exponents) to be coprime to $q-1$. For a carefully chosen $\F$ this added requirements will not increase by much the total degree of the polynomials (see \cite{DvirLovett11} for some exact computations). This choice will guarantee that (a) $U'$ is large and (b) $U'$ is explicit. To see both, notice that for every fixing of the last $n-k$ variables (or indeed any other set of size $n-k$) it is trivial to compute the unique setting of the first $k$ variables so that the resulting point $x_1,\ldots,x_n$ is in $V$. This can be done by a single matrix inversion operation (over $\F$) and $k$ exponentiations. We use the fact that the map $x \mapsto x^{d_i}$ is invertible over $\F$ for all $i \in [k]$. This shows that, assuming the $d_i$'s are coprime to $q-1$, $V'$ has size $q^{n-k}$ and that there is an efficiently computable mapping $\phi : \F^{n-k} \mapsto U'$ that is one-to-one (and the inverse is also efficiently computable). There is also a way to argue about the size of $U'$ for general choice of exponents but this makes the explicitness of the construction less obvious (see \cite{DvirLovett11} for details).

We summarize the above argument in two immediate corollaries of Theorem~\ref{thm-variety-basic} and Theorem~\ref{thm-variety-buckets}.

\begin{cor}\label{cor-variety-basic}
Let $1 \le k \le n$ and $d \geq 1$ be integers and let $\F$ be a field. Let $d_1>d_2>\ldots>d_n > d$ be pairwise relatively prime integers, and assume that at least $k$ of $d_1,\ldots,d_n$ are co-prime to $|\F|-1$. Let $U \subset \bF^n$ be the variety defined
by Theorem~\ref{thm-variety-basic}, and let $U' = U \cap \F^n$. Then
$$
|U'|=|\F|^{n-k},
$$
and for every affine variety $V \subset \F^n$ of dimension $k$ and degree at most $d$,
$$
|V' \cap U| \le d \cdot \prod_{i=1}^{k} d_i.
$$
\end{cor}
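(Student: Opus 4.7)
The plan is to derive Corollary~\ref{cor-variety-basic} directly from Theorem~\ref{thm-variety-basic} together with an elementary bijective counting of the $\F$-points of $U$. For the intersection bound, any affine variety $V \subset \F^n$ of dimension $k$ and degree at most $d$ extends to a variety $\tilde V \subset \bF^n$ of the same dimension and degree; since $V \cap U' = \tilde V \cap U \cap \F^n \subseteq \tilde V \cap U$, Theorem~\ref{thm-variety-basic} applied to $\tilde V$ already delivers the desired bound $d \cdot \prod_{i=1}^{k} d_i$. So the only real content is the computation of $|U'|$.

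For that, I would use the hypothesis to pick a subset $T \subset [n]$ of size $k$ such that $d_j$ is coprime to $|\F|-1$ for every $j \in T$, and then exhibit an explicit bijection between $\F^{n-k}$, indexed by the coordinates $j \notin T$, and $U'$. Given arbitrary values $x_j \in \F$ for $j \notin T$, define $y_j = x_j^{d_j}$ for $j \in T$ and consider the $k \times k$ linear system
$$\sum_{j \in T} A_{i,j}\, y_j \;=\; -\sum_{j \notin T} A_{i,j}\, x_j^{d_j}, \qquad i = 1,\ldots,k.$$
By $k$-regularity of $A$, the coefficient matrix $A_T$ (the submatrix of $A$ on columns $T$) is invertible over $\F$, so the $y_j$ for $j \in T$ are uniquely determined. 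The coprimality of each such $d_j$ with $|\F|-1$ means that $x \mapsto x^{d_j}$ is a bijection of $\F$ (bijective on $\F^*$ by a gcd argument, and fixing $0$), so each $x_j$ with $j \in T$ is uniquely recovered from $y_j$. This produces exactly one element of $U'$ for each choice in $\F^{n-k}$, giving $|U'| = |\F|^{n-k}$, and at the same time exhibits the efficient encoding/decoding map described in the discussion preceding the statement.

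I do not expect a substantial obstacle here: the only thing worth being careful about is that the hypothesis ``at least $k$ of the $d_i$ are coprime to $|\F|-1$'' is exactly what lets us pick such a $T$, and that the $k$-regularity of $A$ is exactly what makes $A_T$ invertible for our chosen $T$. Thus both structural choices in Theorem~\ref{thm-variety-basic} are used precisely once in the proof of the corollary, and everything else reduces to standard finite-field linear algebra together with a single exponentiation per coordinate $j \in T$.
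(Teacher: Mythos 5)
Your proposal is correct and follows essentially the same route as the paper: the intersection bound is read off directly from Theorem~\ref{thm-variety-basic}, and $|U'|=|\F|^{n-k}$ is obtained exactly as in the paper's sketch, by fixing the coordinates outside a set $T$ of $k$ indices with $d_j$ coprime to $|\F|-1$, solving the resulting $k\times k$ linear system (invertible by $k$-regularity of $A$), and inverting the bijective power maps $x\mapsto x^{d_j}$ on $\F$. No gaps.
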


\begin{cor}\label{cor-variety-buckets}
Let $k,d \geq 1$ be integers, $\eps>0$ and let $\F$ be a field. Let $m>k/\eps$ be an integer such that $m$ divides $n$.
Let $d_1>d_2>\ldots>d_m > d$ be pairwise relatively prime integers, and assume that at least $k$ of $d_1,\ldots,d_m$ are co-prime to $|\F|-1$. Let $U^{n/m} \subset \bF^n$ be the variety defined
by Theorem~\ref{thm-variety-buckets}, and let $U' = U^{n/m} \cap \F^n$. Then
$$
|U'| = |\F|^{n(1-k/m)} \ge |\F|^{(1-\eps) n},
$$
and for every affine variety $V \subset \F^n$ of dimension $k$ and degree at most $d$,
$$
|V' \cap U| \le d^{k+1} \cdot (\prod_{i=1}^{k} d_i)^k.
$$
\end{cor}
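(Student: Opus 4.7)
The plan is to handle the two conclusions separately. The intersection bound is essentially immediate from Theorem~\ref{thm-variety-buckets}, whereas the size estimate requires a short computation exploiting both the $k$-regularity of $A$ and the coprimality hypothesis on the exponents.

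For the intersection bound, an affine variety $V\subset\F^n$ of dimension $k$ and degree at most $d$ can be viewed as the $\F$-rational locus of the $\bF$-variety $\tilde V\subset\bF^n$ cut out by the same defining polynomials, which has the same dimension and degree. Since
\[
V\cap U'\;=\;V\cap U^{n/m}\cap\F^n\;\subseteq\;\tilde V\cap U^{n/m},
\]
Theorem~\ref{thm-variety-buckets} applied to $\tilde V$ yields $|V\cap U'|\leq d^{k+1}\cdot\bigl(\prod_{i=1}^{k}d_i\bigr)^k$, as claimed.

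For the size estimate, the Cartesian-product structure of $U^{n/m}$ reduces the claim to showing $|U\cap\F^m|=|\F|^{m-k}$; raising to the $n/m$ power then gives $|U'|=|\F|^{(1-k/m)n}$, and $m>k/\eps$ yields $|U'|\geq|\F|^{(1-\eps)n}$. To compute $|U\cap\F^m|$, I would first permute the columns of $A$ (simultaneously relabeling the variables and exponents) so that the $k$ exponents coprime to $|\F|-1$ occupy the first $k$ positions; this preserves $k$-regularity. Since the top-left $k\times k$ minor of $A$ is then nonzero, applying invertible row operations over $\F$---which do not affect the variety $\V(f_1,\ldots,f_k)$---I may assume the top-left $k\times k$ block of $A$ equals the identity. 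For any choice of $(x_{k+1},\ldots,x_m)\in\F^{m-k}$, the equations $f_1=\cdots=f_k=0$ become $x_i^{d_i}=c_i$ for $i\in[k]$, where each $c_i\in\F$ is a fixed linear combination of $x_{k+1}^{d_{k+1}},\ldots,x_m^{d_m}$. Because $\gcd(d_i,|\F|-1)=1$ for $i\in[k]$, the map $y\mapsto y^{d_i}$ is a bijection on $\F$ (it fixes $0$ and is a bijection of the cyclic group $\F^*$), so each $x_i\in\F$ is uniquely determined, producing exactly $|\F|^{m-k}$ $\F$-rational points on $U$.

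The main obstacle is essentially bookkeeping: one must verify that the column permutation bringing the coprime exponents to the front, followed by the row reduction of $A$, preserves both the $k$-regularity hypothesis and the variety $\V(f_1,\ldots,f_k)$. Once this is checked, the corollary follows from Theorem~\ref{thm-variety-buckets} together with the one-line counting argument above.
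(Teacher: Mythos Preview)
Your proposal is correct and follows essentially the same approach as the paper. The paper presents both corollaries as immediate consequences of the preceding discussion in Section~\ref{sec-finite}: for the size of $U'$, it argues that once the $k$ exponents coprime to $|\F|-1$ are placed in the free positions, fixing the remaining $m-k$ variables and performing a matrix inversion together with $k$ root extractions (using that $x\mapsto x^{d_i}$ is a bijection on $\F$) yields a unique $\F$-point, exactly as you do; the intersection bound is taken directly from Theorem~\ref{thm-variety-buckets}. Your write-up is slightly more explicit about the column permutation and the reduction to a single bucket via the Cartesian-product factorization $|U'|=|U\cap\F^m|^{n/m}$, but the underlying argument is the same.
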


\section{Comparison With a Random Construction}
\label{sec-random}

We compare in this section the explicit results we obtained, with results than one can get from random constructions. In many scenarios random constructions obtain optimal or near optimal parameters, and these can be compared to the best results than one can obtain explicitly. For technical reasons, our discussion in this section will be restricted to varieties that are defined over $\F$. This is in contrast to our explicit construction that works also for varieteis defined over an extension of $\F$. The main technical difficulty with varieties defined not over $\F$ is in bounding their number (this number is finite since we are only interested in points in $\F^n$). 

We recall the parameters we obtained in Corollary~\ref{cor-variety-buckets}. Let $k$ denote the dimension, $d$ the degree and $n$ the number of variables, and let $\eps>0$ denote a small parameter. Choosing a finite field $\F$ appropriately, we gave an explicit construction of a subset $S \subset \F^n$ of size $|S| \ge |\F|^{(1-\eps) n}$ such that for any affine variety $V \subset \bF^n$ of degree $d$ and dimension $k$,
$$
|S \cap V| \le (d+k/\eps)^{O(k^2)}.
$$

We compare in this section what parameters can one achieve, if $S \subset \F^n$ is chosen randomly of size $|S|=|\F|^{(1-\eps)n}$.
We analyze this random construction when the dimension of the variety is small enough, $k \ll \eps n$. We note that our simple analysis for a random construction breaks when $k \approx n$, while our explicit construction still get bounds which are independent of the field size.

Let $\mathcal{V}_{n,d,k}$ denote the family of varieties in $\bF^n$ of degree $d$ and dimension $k$ that are defined over $\F$.

\begin{lem}\label{lemma-random-set}
Let $n,d,k \ge 1, \eps>0$ be parameters, and assume that $k \le \eps n/4$. Let $\F$ be a field large enough such that $d \le |\F|^k$. Let $S \subset \F^n$ be a random subset of size $|S| = |\F|^{(1-\eps)n}$. Then with high probability over the choice of $S$, for all varieties $V \in \mathcal{V}_{n,d,k}$
$$
|S \cap V| \le O\left(\frac{d}{\eps} \cdot {k+d+2 \choose k}\right).
$$
\end{lem}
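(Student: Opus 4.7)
My plan is a standard union-bound argument: for each fixed $V \in \mathcal{V}_{n,d,k}$ I would first show that $|S\cap V|$ is concentrated well below the target $T$, and then count the varieties and union-bound. For a fixed $V$, the affine Schwartz--Zippel bound gives $|V \cap \F^n| \le d\cdot|\F|^k$, so the mean $\mu_V := \E[|S\cap V|] = |S|\cdot|V\cap \F^n|/|\F|^n \le d\cdot|\F|^{k-\eps n}$ is exponentially small in $\eps n - k \ge (3/4)\eps n$ (using the hypothesis $k \le \eps n/4$). Since $S$ is a uniform random subset of $\F^n$ of prescribed size, $|S \cap V|$ is hypergeometric and I would apply the standard Chernoff-type tail bound
\[
\Pr\bigl[|S\cap V| \ge T\bigr] \;\le\; \Bigl(\frac{e\mu_V}{T}\Bigr)^T \;\le\; \Bigl(\frac{ed}{T}\Bigr)^T |\F|^{-T(\eps n - k)}
\]
valid for any $T \ge e\mu_V$.

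It then remains to estimate $|\mathcal{V}_{n,d,k}|$ and take a union bound. The key tool I would use is Lemma~\ref{lem-projection}: for every $V$ and every $(k+1)$-subset $J \subseteq [n]$, there is a polynomial $f_J \in \F[x_j : j \in J]$ of degree $\le d$ vanishing on $V$. Such an $f_J$ is determined by at most $\binom{k+1+d}{d}$ coefficients in $\F$, and a suitable family of $O(n)$ such projections specifies $V$ up to lower-dimensional components (which are harmless for the intersection-with-$S$ bound). Careful bookkeeping, absorbing the combinatorial factor $\binom{n}{k+1}^{O(n)}$ for the choice of subsets into the exponent, should yield a count of the form
\[
|\mathcal{V}_{n,d,k}| \;\le\; |\F|^{O\left(n\binom{k+d+2}{k}\right)}.
\]

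Combining these estimates and choosing $T = C(d/\eps)\binom{k+d+2}{k}$ for a sufficiently large constant $C$, the union bound will read
\[
\Pr\bigl[\exists V \in \mathcal{V}_{n,d,k}: |S\cap V| \ge T\bigr] \;\le\; |\F|^{\,O(n\binom{k+d+2}{k}) \,-\, T(\eps n - k) \,+\, O(T \log_{|\F|}(d/T))} = o(1),
\]
since $\eps n - k \ge (3/4)\eps n$ makes the term $T(\eps n - k)$ dominant. The hard part is the count of $|\mathcal{V}_{n,d,k}|$: a naive enumeration of ideals in $\F[x_1,\ldots,x_n]_{\le d}$ gives an exponent $\Omega(\binom{n+d}{d})$ that is vastly too large. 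The key is to exploit the low dimension and low degree of $V$ via Lemma~\ref{lem-projection} (or, equivalently, via a Chow-form argument) so that $V$ is effectively specified by polynomials in only $k+1$ of the $n$ variables, each with $O(\binom{k+d+2}{k})$ coefficients, thereby removing any dependence on the full $\binom{n+d}{d}$ monomial count and allowing $T$ to be taken independent of $n$.
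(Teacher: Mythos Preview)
Your proposal is correct and follows essentially the same approach as the paper: the paper likewise bounds $|V\cap\F^n|\le d|\F|^k$ (Claim~\ref{claim-size-variety}), counts $|\mathcal V_{n,d,k}|\le |\F|^{n\cdot O(d\binom{k+d+2}{k})}$ via the projection Lemma~\ref{lem-projection} (Claim~\ref{claim-num-varieties}), and finishes with the identical union bound. The only place the paper is more explicit is in the variety count, where it first shows that each \emph{irreducible} $V$ is an irreducible component of the zero set of the $n-k$ projection polynomials $f_i\in\F[x_1,\dots,x_k,x_{k+i}]$ and then sums over irreducible decompositions---this last step is what produces the extra factor of $d$ in the exponent and hence the $d/\eps$ in the final bound, and it tightens your phrase ``specifies $V$ up to lower-dimensional components,'' which as stated is not quite right (the projection equations may cut out additional $k$-dimensional components).
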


First, we need a bound on the number of points in $\F^n$ in a variety $V \in \mathcal{V}_{n,d,k}$.

\begin{claim}\label{claim-size-variety}
Let $V \in \mathcal{V}_{n,d,k}$. Then $|V \cap \F^n| \le d \cdot |\F|^k$.
\end{claim}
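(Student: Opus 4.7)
The plan is to prove this standard Schwartz--Zippel-style bound by induction on the dimension $k$, using the classical ``slice by a coordinate hyperplane'' trick. For the induction to go through cleanly I would state the hypothesis slightly more strongly, allowing $V$ to be any subvariety of $\bF^n$ of dimension $k$ and degree $d$ (not necessarily defined over $\F$), and bound $|V \cap \F^n|$ by $d \cdot |\F|^k$. The base case $k = 0$ is immediate from B\'ezout: a zero-dimensional variety of degree $d$ has at most $d$ points, hence at most $d$ points in $\F^n$.

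For the inductive step, I would first reduce to the case that $V$ is irreducible over $\bF$ by decomposing $V = V_1 \cup \ldots \cup V_t$ with $\dim V_i = k_i \le k$ and $\sum_i \deg V_i \le d$. If the bound holds for each $V_i$, then
$$
|V \cap \F^n| \le \sum_{i=1}^t |V_i \cap \F^n| \le \sum_{i=1}^t \deg(V_i) \cdot |\F|^{k_i} \le d \cdot |\F|^k,
$$
which is why stating the stronger hypothesis over $\bF$ is convenient (the components $V_i$ need not be defined over $\F$).

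Assuming now $V$ is irreducible of dimension $k \ge 1$, $V$ is not a single point, so some coordinate projection $\pi_i : V \to \bF$ is non-constant; after relabeling I would assume this holds for $i = n$, so that $\pi_n(V) = \bF$ (it is an irreducible subset of $\bF$ and not a point). For each $\alpha \in \F$ the slice $V_\alpha := V \cap \{x_n = \alpha\}$ satisfies $V \not\subset \{x_n = \alpha\}$, so by Krull's principal ideal theorem $V_\alpha$ has pure dimension $k-1$ (or is empty), and by B\'ezout $\deg V_\alpha \le \deg V \cdot 1 = d$. Viewing $V_\alpha$ naturally as a subvariety of $\bF^{n-1}$ and invoking the inductive hypothesis on its irreducible components (whose degrees sum to at most $d$) yields $|V_\alpha \cap \F^{n-1}| \le d \cdot |\F|^{k-1}$, so summing gives
$$
|V \cap \F^n| = \sum_{\alpha \in \F} |V_\alpha \cap \F^{n-1}| \le |\F| \cdot d \cdot |\F|^{k-1} = d \cdot |\F|^k.
$$

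The main care point is the dimension/degree bookkeeping for the slice: establishing $\deg V_\alpha \le d$ from B\'ezout requires the intersection with $\{x_n = \alpha\}$ to be proper, which is precisely why one must first select a coordinate direction along which the projection of $V$ is non-constant. Once that choice is made, everything reduces to B\'ezout's inequality and the basic dimension theory of irreducible varieties, so no new ideas beyond classical intersection theory are needed.
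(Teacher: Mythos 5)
Your proof is correct and follows essentially the same route as the paper: induction with coordinate-hyperplane slicing, B\'ezout for the degree bound on each slice, and summing over the $|\F|$ values of the sliced coordinate. The only differences are organizational---you induct on dimension alone, picking a coordinate along which the projection is non-constant (where the paper instead handles the case $V \subset \{x_1=c\}$ by inducting on the number of variables), and you explicitly strengthen the statement to varieties over $\bF$, which cleanly justifies the reduction to irreducible components that need not be defined over $\F$.
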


\begin{proof}
We prove the claim by induction of the number of variables, degree and dimension.
It suffices to prove the claim for irreducible varieties, since if $V = \cup V_i$ is the decomposition
of $V$ to irreducible varieties, then $\deg(V)=\sum \deg(V_i)$ and $\dim(V_i) \le \dim(V)$. So,
we assume that $V$ is irreducible. Let $H_c:=(x_1=c)$ for $c \in \F$ be a family of hyperplanes.
If $V \subset H_c$ for some $c$ then the claim follows by induction on the number of variables.
Otherwise let $V_c:=V \cap H_c$. Then $V_c$ has dimension $k-1$ and degree $\le d$. Hence
$$
|V \cap \F^n| \le \sum_{c \in \F} |V_c \cap \F^n| \le |\F| \cdot d |\F|^{k-1} = d |\F|^k.
$$
\end{proof}

We next need a bound on the number of varieties in $\mathcal V_{n,d,k}$. Recall that this set contains only varieties defined over the finite field $\F$.

\begin{claim}\label{claim-num-varieties}
$\left| \mathcal{V}_{n,d,k}\right| \le |\F|^{n \cdot O\left(d {k+d+2 \choose k}\right)}$.
\end{claim}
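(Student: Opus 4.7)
The plan is to associate each $V \in \mathcal{V}_{n,d,k}$ with a short tuple of polynomials over $\F$ that nearly determines $V$, and then count the possible tuples. It suffices to bound the number of irreducible varieties of dimension $k' \le k$ and degree $d' \le d$, since any $V \in \mathcal{V}_{n,d,k}$ decomposes into at most $d$ irreducible components with degrees summing to at most $d$, and by the convexity of the bound in the exponent the product of per-component counts is absorbed into a single bound of the stated form.

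For irreducible $V$ of dimension $k'$ and degree $d'$, I would first fix (by Noether normalization) a subset $J_0 \subseteq [n]$ with $|J_0| = k'$ such that the coordinates $\{x_j\}_{j \in J_0}$ are algebraically independent on $V$. There are at most $\binom{n}{k'} \le n^k$ such choices --- a harmless factor. Assume $J_0 = \{1,\ldots,k'\}$. For each $j \in [n] \setminus J_0$, I would invoke Lemma~\ref{lem-projection} applied to the $(k'+1)$-subset $J_0 \cup \{j\}$ (using its ``moreover'' clause, since $V$ is defined over $\F$) to produce a nonzero polynomial $g_j \in \F[x_1,\ldots,x_{k'},x_j]$ of degree $\le d'$ that vanishes on $V$. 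Algebraic independence of $\{x_j\}_{j \in J_0}$ on $V$ forces each $g_j$ to have positive degree in $x_j$. The tuple $(g_j)_{j \notin J_0}$ is thus described by at most $(n - k') \binom{k'+1+d'}{d'}$ coefficients in $\F$.

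Next I would argue that the tuple determines $V$ up to a bounded ambiguity. Let $W := \V(g_{k'+1},\ldots,g_n)$ and let $\ell_j \in \F[x_{J_0}]$ be the nonzero leading coefficient of $g_j$ viewed as a polynomial in $x_j$. Outside the proper subvariety $B := \bigcup_j \V(\ell_j) \subset \bF^{J_0}$, each univariate slice $g_j(a,x_j)$ has at most $d'$ roots, so the fiber of $W$ over any point of $\bF^{J_0} \setminus B$ is finite. Consequently, every irreducible component of $W$ that projects dominantly onto $\bF^{J_0}$ --- call these the \emph{good} components --- has dimension exactly $k'$. Because $\pi_{J_0}|_V$ is dominant, $V$ is not contained in $\pi_{J_0}^{-1}(B)$, and irreducibility of $V$ then forces $V$ to coincide with some good component of $W$. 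Heintz's effective B\'ezout inequality, applied to the equidimensional-of-expected-codimension part of $W$, bounds the sum of degrees of the good components by $\prod_j \deg g_j \le (d')^{n - k'}$, yielding at most $(d')^{n - k'}$ candidates for $V$ given the tuple.

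Combining these factors with the standing hypothesis $d \le |\F|^k$ to absorb $(d')^{n - k'} \le |\F|^{k(n - k')}$, and using the elementary bound $\binom{k'+1+d'}{d'} \le \binom{k+1+d}{d} = O(d \binom{k+d+2}{k})$ (verified by a direct ratio computation for $k, d \ge 1$), the number of irreducible $V$ of degree $d'$ is bounded by $|\F|^{n \cdot O(d' \binom{k+d+2}{k})}$; summing over decompositions into irreducible components with $\sum d'_i \le d$ yields the claimed bound $|\F|^{n \cdot O(d \binom{k+d+2}{k})}$. The main obstacle is the dimension analysis of $W$: since the polynomials $g_j$ share all the variables of $J_0$, the intersection is not a priori a proper complete intersection, and $W$ may harbour higher-dimensional components lying over the exceptional locus $B$. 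The resolution --- that the dominance of $\pi_{J_0}|_V$ forces $V$ to sit in the ``good'' part of $W$ where a B\'ezout-type bound applies --- is the key geometric point.
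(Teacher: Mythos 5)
Your proposal is correct and takes essentially the same route as the paper: fix a transcendence basis among the coordinates, apply Lemma~\ref{lem-projection} (with its ``moreover'' clause) to that basis plus one extra variable to get $n-k'$ polynomials over $\F$ of degree at most $d'$ in $k'+1$ variables each, use the leading-coefficient locus to show $V$ must be an irreducible component of their common zero set, count coefficient tuples, and sum over decompositions into irreducible components. The only real difference is that you explicitly bound the number of candidate components per tuple via a refined B\'ezout bound (a point the paper leaves implicit), which is a harmless refinement; note it does not actually require the hypothesis $d\le|\F|^k$, since $(d')^{\,n-k'}\le d^{\,n}\le|\F|^{nd}$ is already absorbed by the exponent $n\cdot O\bigl(d\binom{k+d+2}{k}\bigr)$.
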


\begin{proof}
We first argue about irreducible varieties. Let $V$ be an irreducible variety of degree $d$ and dimension $k$. Assume w.l.o.g that $x_1,\ldots,x_k$ are algebraically independent over $V$. We first claim that there exist polynomials $\{f_i(x_1,\ldots,x_k,x_{k+i}): 1 \le i \le n-k\}$ with coefficients in  $\F$ of degree $d$, such that $V$ is an irreducible component of $U=\{x \in \bF^n: f_1(x)=\ldots=f_{n-k}(x)=0\}$. To see this, note that by Lemma~\ref{lem-projection} we can take $f_i$ to be the polynomial defined by projection to the variables $x_1,\ldots,x_k,x_{k+i}$. Since we assumed $x_1,\ldots,x_k$ are algebraically independent over $V$, we must have that $f_i$ depends on $x_{k+i}$. Let us decompose
$
f_i(x_1,\ldots,x_k,x_{k+i}) = \sum_{j=1}^{d_i} f_{i,j}(x_1,\ldots,x_k) \cdot x_{k+i}^{j},
$
where $1 \le d_i \le d$ and $f_{i,d_i}$ is not identically zero. Let $g(x_1,\ldots,x_k) = \prod_{i=1}^n f_{i,d_i}(x_1,\ldots,x_k)$, and let $G=\{x \in \bF^n:g(x)=0\}$ be the hypersurface defined by $g$. We have by construction that $U \setminus G$ has dimension $k$. Moreover, we have that $V \setminus G$ has dimension $k$, since $V$ is irreducible and is not contained in $G$. Thus, there exists a Zariski open subset of $U$ of dimension $k$ which contains a Zariski open subset of $V$. Since $V$ is irreducible, this can only happend in $V$ is an irreducible component of $U$.

%

So, we obtain that there exist $(n-k)$ polynomials of degree $d$ in $k+1$ variables over $\F$, such that the variety that they define have an irreducible component equal to $V$. Hence the number of distinct possibilities for $V$ is bounded by
$$
{n \choose k} \left(|\F|^{{k+d+2 \choose k}}\right)^{(n-k)} \le n^k |\F|^{n \cdot {k+d+2 \choose k}} \le |\F|^{n \cdot O\left({k+d+2 \choose k}\right)}.
$$
To get the bound for general, not necessarily irreducible varieties, we need to sum over all possible decompositions of $V$ into irreducible components of degree $d_1+\ldots+d_r=d$. Hence
\begin{eqnarray*}
| \mathcal{V}_{n,d,k}| &\le& \sum_{d_1+\ldots+d_r=d} \prod_{i=1}^r|\F|^{n \cdot O\left({k+d_i+2 \choose k}\right)}
\le |\F|^{n \cdot O\left(d {k+d+2 \choose k}\right)}.
\end{eqnarray*}
\end{proof}

We now prove Lemma~\ref{lemma-random-set}.

\begin{proof}[Proof of Lemma~\ref{lemma-random-set}]
Let $c>0$ be a parameter to be fixed later. Let $S \subset \F^n$ be a random subset of size $|S|=|\F|^{(1-\eps)n}$. We will show that with high probability over the choice of $S$, $|S \cap V| \le c$ for all $V \in \mathcal{V}_{n,d,k}$. In order to show this, consider first a fixed variety $V \in \mathcal{V}_{n,d,k}$. By Claim~\ref{claim-size-variety} we know that $|V \cap \F^n| \le d |\F|^k$, hence
$$
\Pr_S[|S \cap V| \ge c] \le {|V \cap \F^n| \choose c} |\F|^{-\eps n \cdot c} \le (d |\F|^{k - \eps n})^c \le |\F|^{-(\eps/2)n \cdot c},
$$
by our choice of parameters.
The number of distinct $V \in \mathcal{V}_{n,d,k}$  is bounded by Claim~\ref{claim-num-varieties} by at most $|\F|^{n s}$ where $s=O(d {k+d+2 \choose k})$. So, for $c \ge O(s/\eps)$ we get by the union bound that with high probability, $|S \cap V| \le c$ for all $V \in \mathcal{V}_{n,d,k}$.
\end{proof}

\section{Connection to a Conjecture of Griffiths and Harris}\label{sec-conjecture}

Here we consider how Theorem~\ref{thm-variety-basic} fits with various known results and conjectures
about sub-varieties of complete intersections.

A {\em hypersurface} of degree $d$ is a zero set of a polynomial
of degree $d$; these form a vector space $V_{n,d}$.
A claim holds for a {\em very general} hypersurface if it holds
whenever the polynomial is outside a countable union of Zariski closed
sub-varieties of  $V_{n,d}$.  According to a conjecture of \cite{g-h},
if $X_d\subset {\mathbb P}^n$ is a very general projective hypersurface
of sufficiently high degree then for every sub-variety
$Z\subset X$, the degree of $X$ divides the degree of $Z$. The conjecture does not specify `sufficiently high degree', but the only known counter examples have
$d\leq 2n-3$.

The conjecture is not known in general. For $n=3$ this is the
classical Noether--Lefschetz theorem. In higher dimensions
only much weaker divisibility results are known using the method of
\cite{k-tr} and only some of these have been worked out explicitly.
For instance, if  $X_d\subset {\mathbb P}^4$ is a very general hypersurface
and $d=p^3$ for a prime $p\geq 5$ then $p$
divides the degree of  every subvariety $Z\subset X_{d}$.

Note further that it is known that one definitely needs a
  countable union of Zariski closed
subvarieties of  $V_{n,d}$ for the conjecture to hold,
thus the general methods may not guarantee the existence of
 examples over countable fields.
For a complete treatment see \cite[Chap.III]{voi-book} and
\cite{voi} for further related results.

Let us now assume the above conjecture and see what
it would imply if we replace the construction of Theorem~\ref{thm-variety-basic} with general complete intersections of the same degrees.
Applying the conjecture to several hypersurfaces, we get that
 if $d_1,\dots, d_k $ are pairwise relatively prime and
$$
X_{d_1,\dots, d_k}:=X_{d_1}\cap \cdots \cap X_{d_k}\subset {\mathbb P}^n
$$
is a very general complete intersection
of sufficiently high degree then
$d_1\cdots d_k$ divides the degree of
 every subvariety $Z\subset X_{d_1,\dots, d_k}$.

Let now $Y\subset {\mathbb P}^n$ be any subvariety
of degree $<\min_i\{d_i\}$.  Consider the sequence of intersections
$$
Y\supset Y\cap X_{d_1}\supset Y\cap X_{d_1, d_2}\supset \cdots
\supset Y\cap X_{d_1,\dots, d_k}.
$$
If the dimension drops at each step then
$Y\cap X_{d_1,\dots, d_k} $ is zero dimensional.
Otherwise there is an index $i$  such that
$Y_i:=Y\cap X_{d_1,\dots, d_i}$  has dimension $k-i$ but
$X_{d_{i+1}}$ contains one of the irreducible components
of $Y_i$. We know that $\deg Y_i=\deg Y\cdot d_1 \cdots  d_i$
and the degree of every  irreducible component
of $Y_{ij}\subset Y_i$ is divisible by $d_1 \cdots  d_i$.
If $Y_{ij}\subset X_{d_{i+1}}$ then its degree is also
divisoble by $d_{i+1} $. Thus
$$
\deg Y\cdot d_1 \cdots  d_i=\deg Y_i\geq \deg Y_{ij}\geq
d_1 \cdots  d_i\cdot d_{i+1},
$$
a contradiction.

The bound $\deg Y<\min_i\{d_i\}$ is optimal as shown by
an intersection of $X_{d_i}$ with a linear space of dimension $k+1$.

Let us note finally that \cite{g-h} and related works
consider projective varieties while the setting considered in  Theorem~\ref{thm-variety-basic} is
affine. In fact, the projective closures of our constructions
are very degenerate: their intersection with the hyperplane at infinity
is a linear space (with high multiplicity).

\bibliographystyle{alpha}
\bibliography{laurent}

\end{document}